\DeclareMathAlphabet{\mathantt}{OT1}{antt}{li}{it}
\DeclareMathAlphabet{\mathpzc}{OT1}{pzc}{m}{it}
\newtheorem{theorem}{Theorem}
\newtheorem{definition}{Definition}
\newcommand\T{\rule{0pt}{2.4ex}}
\newcommand\B{\rule[-1.0ex]{0pt}{0pt}}
\DeclareFontFamily{OT1}{pzc}{}
\DeclareFontShape{OT1}{pzc}{m}{it}%
  {<-> s * [1.1] pzcmi7t}{}
\DeclareMathAlphabet{\mathpzc}{OT1}{pzc}%
                     {m}{it}
\def\B{\mathcal{B}}
\def\R{\mathcal{R}}
\def\T{\mathcal{T}}
\def\C{\mathcal{C}}
\def\G{\mathcal{G}}
\DeclareMathOperator{\argmin}{\arg\min}
\title{\vspace{-0.4cm}Energy-Aware Wireless Relay Selection in Load-Coupled OFDMA Cellular
Networks}
\author[1]{\vspace{-0.3cm}Lei You}
\author[1,2]{Di Yuan}
\author[1]{Nikolaos Pappas}
\author[1]{Peter V\"{a}rbrand}
\affil[1]{{\small Department of Science and Technology, Link{\"o}ping
University, Sweden}}
\affil[2]{{\small Department of Information Technology, Uppsala University,
Sweden}}
\affil[ ]{
    \small\texttt{ \{lei.you; di.yuan; nikolaos.pappas@liu.se\}
    \{petva@itn.liu.se\}}
\/
\vspace{-0.3cm}
}
\begin{document}
\maketitle

\begin{abstract}
We investigate transmission energy minimization via optimizing wireless relay
selection in orthogonal-frequency-division multiple access (OFDMA) networks. We
take into account the impact of the load of cells on transmission energy.  We
prove the $\mathcal{NP}$-hardness of the energy-aware wireless relay selection
problem.  To tackle the computational complexity, a partial optimality condition
is derived for providing insights in respect of designing an effective and
efficient algorithm.  Numerical results show that the resulting algorithm
achieves high energy performance.
\end{abstract}

\vspace{-0.5cm}
\section{Introduction}

Relay techniques provide coverage extension, alleviate fading effects in wireless
channels, and lead to more rapid network roll-out to improve the overall system
energy efficiency~\cite{Andrews:ez,Michalopoulos:2015bi,Sheng:2015ks}. 
In meeting the fast growing demand of mobile
communication and the increase of user density, wireless relaying is
viewed as a promising technique for the upcoming 5G
\cite{EricssonABErikDahlman:2014uja}.
It is
shown that wireless backhaul technologies have competitive advantages over the
fiber-based solution~\cite{dohler20165g}. 
In 5G, outdoor relays are likely to be densely deployed in urban
areas, which may cause the cost of installing fiber-based relay nodes 
to reach an unacceptable level. For the indoor scenarios, wireless backhauling may
provide better flexibility and cost-efficiency, compared to a fiber-based
solution~\cite{release13-1}. In addition, though fiber-based
backhauling has advantage in capacity, reliability, and robustness for transmission, there
are cases in which wired backhauling is impossible (e.g.~short-term links
for emergency/disaster relief), hence making the wireless solution to be the
only option for such scenarios~\cite{dohler20165g}. 

There are two types of relaying modes in terms of wireless
backhauling~\cite{release13-1,Gora:2011bc}. One is called ``out-band'' mode, in
which the backhaul and access links operate on different carriers.  The other is
``in-band'' mode, meaning that there is no explicit splitting in frequency
resource between backhaul links and access links~\cite{release13-1,Gora:2011bc}.
Compared to the former, the latter does not require a pre-defined separation in
the frequency domain. Moreover, if relays are required to operate on a single
carrier, then there is no possibility to make separation for implementing
out-band relay mode, and thus in-band relay would be the only option in this
case~\cite{Gora:2011bc}.

Recently, studies~\cite{Cavalcante:2014jd,Ho:2015hw} investigated energy
minimization in orthogonal-frequency-division multiple access (OFDMA) networks,
under an interference model proposed in~\cite{Siomina:eq}. This model
characterizes the coupling relationship among the load of cells, which is
defined to be the proportion of consumed time-frequency resource in each cell.
The model is therefore named as a ``\textit{load-coupling}''
model~\cite{Siomina:eq}.  However, understanding and analyzing load coupling for
relays with wireless backhauling is not straightforward.  In this paper, we
provide significant extensions of the model to wireless relay scenarios, following the
LTE-advanced standard of wireless relays in\cite{release13-1}. We formulate the
energy-aware relay selection problem, named \textit{MinE}, and prove its
computational hardness.  Moreover, we derive an optimality condition, based on
which a relay selection algorithm is proposed for solving \textit{MinE}.
Numerical results show significant improvement on network energy consumption,
compared to the standard strategy of strongest-cell association.

\vspace{-0.3cm}
\section{System Model}
\label{sec:sys}

\subsection{Network Model}
We consider a heterogeneous cellular network (HetNet) with macro cells (MCs),
user equipments (UEs), and relay cells (RCs). Denote by
$\B=\{1,2,\ldots,n_{\B}\}$ the set of MCs, $\T=\{1,2,\ldots,n_{\T}\}$
the set of UEs, and $\R=\{n_{\T}+1,n_{\T}+2,\ldots,n_{\T}+n_{\R}\}$ the
set of RCs. We focus on downlink transmission in this paper.  For any UE
$j\in\T$, the set of $j$'s candidate serving cells is denoted by $\C_j$. For any
RC $k\in\R$, denote by $\C_k$ the set of $k$'s candidate MCs for establishing
the backhaul link.  The relay selection aims at 1) choosing a serving
cell out of $\C_j$ for all $j\in\T$, and 2) finding for each RC $k\in\R$ an
MC out of $\C_k$ to establish the backhaul link, so as to minimize the
network transmission energy.

We assume in-band wireless relay transmission~\cite{release13}, which implies no
explicit splitting of available time-frequency resource between the backhaul
links and the access links. To avoid the loop interference~\cite{Gora:2011bc},
the backhaul and access links should operate on orthogonal resources, meaning
that, within the area of each MC, the time-frequency resource units (RUs)
utilized by the two types of links do not overlap. Thus some of the links
preserve orthogonality with each other. We refer to 
\figurename~\ref{fig:illustration} for an illustration.
\begin{figure}[!h]
\vskip -10pt
    \centering
    \includegraphics[width=0.55\linewidth]{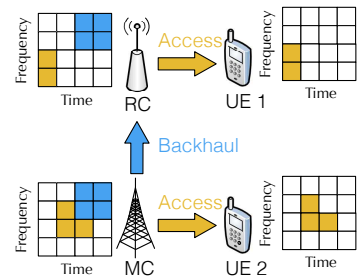}
    \vskip -10pt
    \caption{Illustration of the HetNet model. In this example, there are one
        MC, one RC, and two UEs. UE 1 and UE 2 are served by the RC and MC,
        respectively. The RC is doing a wireless backhauling with the MC\@. The
        used resource for access links and backhaul links is marked by yellow
    and blue colors, respectively. In time and frequency, the resource allocated to
backhauling has orthogonality to the resource used by the access links.}
\label{fig:illustration}
\vspace{-0.3cm}
\end{figure}

In the remaining parts of this paper, we use the term \textit{``orthogonal
links''} to refer to those links using orthogonal time-frequency resource for
transmission. Such links are said to be orthogonal to each other. Below, we
discuss the characterization of orthogonality. For the sake of presentation, consider given
association of UE access and RC backhauling, and denote by $\R_i$ the set of RCs
with a wireless backhaul connected to MC $i$ for each $i\in\B$. The set of UEs
served by any cell (MC or RC) $i$ is represented by $\T_i$.  Denote by tuple
$\langle i,j\rangle$ any (backhaul or access) link from $i$ to $j$. For any
access link $\langle i,j \rangle$ with $i\in\B$ and $j\in\T_i$, denote by
$\mathcal{L}_{ij}=\{\langle i,v\rangle:v\in\R_i\cup\T_i\}$ the set of links that
preserve orthogonality to link $\langle i,j\rangle$.  For $k\in\R$, suppose that
it is connected with some MC $i$ with a backhaul link.  We define
$\mathcal{L}_{kj}=\{\langle k,v\rangle:v\in\T_k\}\cup\{\langle i,k \rangle\}$ to
be the set of links having the intra-cell orthogonality to the access link
$\langle k,j\rangle$ with $j\in\T_k$.  And for the backhaul link $\langle i,k
\rangle$, we define $\mathcal{L}_{ik}=\{\langle i,v\rangle:v\in\R_i\cup\T_i\}
\cup\{\langle k,v\rangle:v\in\T_k\}$ the set $\mathcal{L}_{ik}$.  We denote by
$\mathcal{L}$ the set of all backhaul and access links in the network.

\vspace{-0.4cm}
\subsection{Load-Coupling Model}
Let $r_{ij}$ be the bit rate demand on the link $\langle
i,j\rangle$. Denote by $\gamma_{ij}$ the signal-to-interference-and-noise ratio
(SINR) from $i$ to $j$.  Without loss of generality, we use an (RU)
to refer to the minimum unit for resource allocation. The bandwidth per RU is
denoted by $B$. In the denominator in~\eqref{eq:load}, $B\log_2(1+\gamma_{ij})$
computes the achievable bit rate per RU\@. We assume that there are $M$ RUs in
total, such that $MB\log_2(1+\gamma_{ij})$ is the total achievable bit rate for
UE $j$.  In~\eqref{eq:load}, $x_{ij}$ is then defined to be the proportion of
RUs used by the transmission link $\langle i,j \rangle$, among all RUs in
cell $i$.  The sum of the proportion of allocated RUs in any cell $i$, i.e.,
$\sum_{j\in\T_i}x_{ij}$, is
defined to be the \emph{load} of cell $i$, which is bounded by the full load,
i.e. $\sum_{j\in\mathcal{T}_i}x_{ij}\leq 1~i\in\mathcal{B}\cup\mathcal{R}$.
\vskip -10pt
\begin{equation} 
    x_{ij}=\frac{r_{ij}}{MB\log_2(1+\gamma_{ij})} 
\label{eq:load}
\end{equation}
\vskip -5pt

The SINR on any RU allocated to $\langle i,j\rangle$ is given
by~\eqref{eq:sinr}. In the nominator, $p_{ij}$ is the transmission power of an RU of link $\langle i,j\rangle$ in cell $i$. The value of $g_{ij}$ is
the power gain from $i$ to $j$. In the denominator
of~\eqref{eq:sinr}, recall that $x_{vu}$ represents the proportion of occupied
RU by $\langle v,u\rangle$ in cell $v$. The
value of $x_{vu}$ is then interpreted as the likelihood that $\langle i,j\rangle$ receives
interference from $\langle v,u\rangle$ on the RU\@. Note that $\langle
v,u\rangle\in\mathcal{L}\backslash\mathcal{L}_{ij}$, which is the set of all links that
are not required to be orthogonal to $\langle i,j\rangle$.
\vskip -5pt
\begin{equation}
\gamma_{ij}=\frac{p_{ij}g_{ij}}{\sum_{\langle v,u\rangle\in
\mathcal{L}\backslash\mathcal{L}_{ij}}p_{vu}g_{vj}x_{vu}+\sigma^2}
\label{eq:sinr}
\end{equation}
\vskip -5pt

By~\eqref{eq:load} and~\eqref{eq:sinr}, one can observe that a change on
$x_{uv}$ for any link $\langle u,v\rangle$ may cause a variation after the SINR
of some link $\langle i,j\rangle$, thus leading to a new value of $x_{ij}$,
i.e., the required resource consumption for link $\langle i,j\rangle$. 
Thus, the levels of resource consumption are inherently coupled. This
relationship, as characterized by~\eqref{eq:load} and~\eqref{eq:sinr}, is called
load-coupling.



\vspace{-0.35cm}
\subsection{Computation of Transmission Energy}

Recall that $x_{ij}$ represents the proportion of consumed RU of link
$\langle i,j \rangle$. Hence, the number of RUs that are used for transmission
by $\langle i,j\rangle$ is $Mx_{ij}$. On each RU, the transmit power is
$p_{ij}$. Then the energy consumption on link $\langle i,j\rangle$ is
$Mp_{ij}x_{ij}$. 
We now focus on how to compute $x_{ij}$ in the load-coupling model
in~\eqref{eq:load} and~\eqref{eq:sinr}. Let $n=n_{\T}+n_{\R}$.  The proportions
of RU consumption for all potential links in the network are represented by the
vector $\bm{x}=[~{[x_{11},\dots,x_{|\C_1|1}]},\dots,{[x_{1n},
\dots,x_{|\C_n|n}]}~]$.

By plugging~\eqref{eq:sinr}
in~\eqref{eq:load}, we get the function of the proportion of consumed RUs by
$\langle i,j\rangle$ in~\eqref{eq:xfunc} below. For vector
$\bm{x}$ satisfying the cell-load coupling relation in the system model 
$x_{ij}=F_{ij}(\bm{x})$ holds for all $\langle i,j\rangle\in\mathcal{L}$. 
\begin{equation}
F_{ij}(\bm{x})=
\frac{r_{ij}}
{MB\log_2(1+
\frac{p_{ij}g_{ij}}{\sum\limits_{\langle
    v,u\rangle\in\mathcal{L}\backslash\mathcal{L}_{ij}}p_{uv}g_{uj}x_{uv}+\sigma^2}
)}
\label{eq:xfunc}
\end{equation}
\vskip -5pt

It can be verified by observing the concavity of function $F_{ij}(\bm{x})$ that
$F_{ij}$ is a standard interference function (SIF) in respect of $\bm{x}$
\cite{Cavalcante:2014jd,Yates:1995eh}.  An SIF has the following property:
starting from an arbitrary positive $\bm{x}^{(0)}$, if the fixed point of
function $F_{ij}$ exists, then it is unique, can be iteratively computed by
$x^{(k)}_{ij}=F_{ij}(\bm{x}^{(k-1)})$ $(k\geq 1)$. 

\vspace{-0.3cm}
\section{Problem Formulation}
For any UE $j\in\T$, we use a variable $a_j$ to indicate the UE's serving cell,
i.e. $a_j=i$ if UE $j$ is currently served by cell $i$. Similarly, for any
RC $k\in\R$, we use $a_k=i$ to indicate that RC $k$ is connected to MC $i$ with a
wireless backhaul. For any $j$, $a_j\in\mathcal{C}_j$
for all $j\in\T\cup\R$. The vector $\bm{a}$ then denotes
the association among MCs, RCs and UEs.
\begin{figure}[!h]
\vskip -20pt
\begin{subequations}
\begin{alignat}{2}
[\textit{MinE}]\quad &
\min\limits_{\bm{x},\bm{a},\bm{r}}\quad M\sum_{j=1}^{n}p_{a_{j}j}x_{a_{j}j}
\label{eq:obj}\\
s.t. \quad & \bm{x}=\bm{F}(\bm{x},\bm{a},\bm{r})\label{eq:cstr_lc} \\
& r_{a_j j}=d_j \quad j\in\T \label{eq:cstr_rj}\\
& r_{a_k k}=\sum_{j\in\T_k}d_j \quad k\in\R \label{eq:cstr_rk}\\
& \sum_{k\in\mathcal{R}:a_k=i}x_{ik} + \sum_{j\in\mathcal{T}:a_j=i}x_{ij}\leq 1 \quad
i\in\mathcal{B} \label{eq:cstr_ld}\\
& x_{a_{k}k} + \sum_{j\in\mathcal{T}:a_j=k}x_{kj} \leq 1  \quad
k\in\mathcal{R}\label{eq:cstr_ld2} \\ 
& a_j\in\mathcal{C}_j \quad j\in\mathcal{T}\cup\mathcal{R} \label{eq:cstr_C}
\end{alignat}
\label{eq:mine}
\vskip -20pt
\end{subequations}
\end{figure}
The energy-aware relay selection problem, a.k.a. \textit{MinE}, is formulated in
\eqref{eq:mine}. The objective of minimizing the energy on all links is given
in~\eqref{eq:obj}. 
Constraint~\eqref{eq:cstr_lc} ensures that $\bm{x}$ satisfies
the coupling relationship in the system model. Constraint~\eqref{eq:cstr_rj}
guarantees that the bit rate demand of any UE $j\in\T$ is satisfied.
Constraint~\eqref{eq:cstr_rk} ensures sufficient bit rate on each backhaul link. 
Constraint~\eqref{eq:cstr_ld}
and~\eqref{eq:cstr_ld2} are imposed to limit the proportion of consumed RUs in
each cell to be no more than 1, corresponding to the full load constraint for
MCs and RCs, respectively. Constraint~\eqref{eq:cstr_C} is imposed such
that the selected cell for a backhaul/access link is within the candidate set. 

\vspace{-0.3cm}
\section{Complexity Analysis}

\begin{theorem}
    \textit{MinE} is $\mathcal{NP}$-hard.
\label{thm:np-hard}
\end{theorem}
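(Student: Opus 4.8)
The plan is to establish $\mathcal{NP}$-hardness by a polynomial-time reduction from the Partition problem: given positive integers $s_1,\dots,s_m$ with $\sum_l s_l = 2S$, decide whether there is an index set $P$ with $\sum_{l\in P} s_l = S$. I would map such an instance to a \textit{MinE} instance with no relay cells ($n_{\mathcal{R}}=0$, a legitimate special case), two macro cells $\mathcal{B}=\{1,2\}$, and $m$ UEs, where UE $l$ has both cells as candidates, $\mathcal{C}_l=\{1,2\}$, and is configured symmetrically with respect to the two cells (equal power $p_{1l}=p_{2l}$, symmetric gains, and demand $d_l$ scaled to $s_l$). Under this construction an association $\bm{a}$ is exactly a partition of $\{1,\dots,m\}$ into the UEs served by cell $1$ and those served by cell $2$, and the per-cell full-load constraints~\eqref{eq:cstr_ld} play the role of the two capacity bounds.

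The key steps, in order, are: (i) fix the physical parameters ($M$, $B$, $\sigma^2$, $p_{il}$, $g_{il}$, $d_l$) so that the footprint UE $l$ imposes on its serving cell is an increasing image of $s_l$; (ii) for each association, invoke the standard-interference-function (SIF) property recalled after~\eqref{eq:xfunc} to guarantee that the load-coupling equation $\bm{x}=\bm{F}(\bm{x},\bm{a},\bm{r})$ has a unique fixed point, so the two cell loads are well defined; (iii) show that both full-load constraints can be met simultaneously if and only if the total work splits evenly between the two cells, i.e.\ if and only if the Partition instance is a \textsc{yes}-instance; and (iv) observe that, by the built-in symmetry, every feasible (balanced) association attains the same value of~\eqref{eq:obj}, so the threshold version of \textit{MinE} decides Partition. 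As the reduction is clearly polynomial, $\mathcal{NP}$-hardness of \textit{MinE} follows.

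The main obstacle is the load-coupling itself: because the interference term in~\eqref{eq:sinr} mixes the two cells, the resource $x_{il}$ consumed by UE $l$ in cell $i$ is not a fixed constant but a nonlinear function of the \emph{other} cell's load, and hence of the whole association, so step (iii) cannot simply treat the UEs as items of fixed size. To control this I would work at an interference-limited operating point, where $\log_2(1+\gamma_{il})$ is well approximated by its small-argument linearization, so that each $x_{il}$ is, up to a controllable error, affine in the interfering cell's load; the two coupled load equations then collapse to a $2\times 2$ linear system whose unique solution coincides with the SIF fixed point and can be written in closed form. Choosing the integer-derived coefficients with a margin then makes the condition ``both loads $\le 1$'' coincide exactly with the existence of an equal partition, with the integrality of the $s_l$ ruling out near-feasible but non-partition associations. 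Verifying that this closed-form solution is indeed the fixed point, and that the margin argument is exact, is the technical crux; the monotonicity and scalability of $\bm{F}$ are the properties I would lean on to push it through.

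Finally, I would remark that the combinatorial choice can equivalently be carried by relay-to-macro backhaul selection---each relay cell picking its serving macro out of $\mathcal{C}_k$ so as to balance the macro loads---yielding the same bound through constraints~\eqref{eq:cstr_ld}; and that reducing instead from multiprocessor scheduling (bin packing) gives a $K$-cell version of the same argument.
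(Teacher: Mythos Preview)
Your route differs from the paper's: the paper reduces from \textsc{Maximum Independent Set} (MIS), not \textsc{Partition}. For each vertex of a graph $\mathcal{G}$ it creates one UE, one MC and one RC, and sets the gains so that (i) serving a UE through its own RC is strictly cheaper in energy than through its MC, and (ii) two RCs attached to adjacent vertices cannot be simultaneously active without violating feasibility, because a small cross-gain $\epsilon$ between a RC and the neighbouring UE pushes that UE's already fully-loaded RC over its resource budget. Minimising energy then amounts to activating a maximum set of pairwise non-adjacent RCs, i.e.\ solving MIS\@. This buys \emph{strong} $\mathcal{NP}$-hardness and keeps the relay-selection mechanism (and the interference coupling) at the centre of the argument; your \textsc{Partition} reduction, even if completed, yields only weak $\mathcal{NP}$-hardness and really exhibits hardness of the pure load-balancing sub-problem with $n_{\mathcal{R}}=0$.

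Your proposal also has a genuine gap precisely where you flag ``the technical crux.'' Working at an interference-limited point and replacing $\log_2(1+\gamma_{il})$ by its linearisation does \emph{not} give you the fixed point of $\bm{F}$; it gives the fixed point of a different map, and you never bound the discrepancy. For step~(iii) you must show that for the \emph{true} fixed point both cell loads are $\le 1$ iff the integer totals split exactly as $S$ and $S$, with a margin that survives numbers of size up to $2^{b}$ in a $b$-bit \textsc{Partition} instance. Monotonicity and scalability of $\bm{F}$ guarantee uniqueness and convergence; they do not quantify the distance to your linearised surrogate. Note also that the coupling can work \emph{against} you: shifting demand to one cell removes the other cell's interference and can lower the heavier cell's load, so ``both loads $\le 1$ iff balanced'' is not automatic. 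The clean fix---and the paper uses the same freedom when it declares all non-structural gains ``negligible, treated as zero''---is to kill the coupling: set the cross-cell gains to zero. Then $x_{il}=d_l/\bigl(MB\log_2(1+p g/\sigma^2)\bigr)$ is a fixed multiple of $d_l$, constraints~\eqref{eq:cstr_ld} become $\sum_{l:a_l=i}s_l\le S$ exactly, and the reduction is immediate without any approximation. As written, however, step~(iii) is not established.
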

\begin{proof}
We reduce the Maximum Independent Set (MIS) problem to \textit{MinE}. We
construct a specific HetNet scenario. For each UE, there is one potential MC and
one potential RC as candidate serving cells. Correspondingly, for any undirected
graph instance $\G$ with $N$ nodes ($N\geq 2$) in the MIS problem, we define $N$
UEs. Thus, for any node in $\G$, we have one UE, one MC and one RC\@. We use $1,
2, \ldots, N$ to index the nodes in graph $\G$. We use the term
``\textit{neighboring}'' to refer to the relationship of any two entities that
are associated respectively to two neighboring nodes in $\G$.

For any node $i$ in $\G$, we set the gain from MC $i$ to UE $i$ to $1.0$, the
gain from RC $i$ to UE $i$ to $6.0$, and the gain from MC $i$ to RC $i$ to
$3.0$, respectively.  For any two neighbouring nodes $i$ and $k$ (meaning that
there is an edge between node $i$ and node $k$) in graph $\G$, we set the gain
from RC $i$ to UE $k$ to a small positive value $\epsilon$. The gain values
other than the above three cases are set to be negligible, treated as zero. The
noise $\sigma^2$ is set to $1.0$.  The values of gain and noise can be scaled
without affecting the validity of the proof.  The transmit powers of MCs and RCs
are set to $1.0$ and $0.5$, respectively.The bit rate demand for any UE is set
to $1.0$.  

Due to space limit, we give a sketch of the line of arguments. For any UE $i$, if we have it
served by RC $i$, then all the resource in RC $i$ is in use. If any RC $k$
neighboring to UE $i$ is activated, then UE $i$ would receive the interference
from RC $k$, leading to that UE $i$'s demand cannot be satisfied anymore by the
access link from RC $i$ to UE $i$. Hence in a feasible solution, any pair of two
neighboring UEs cannot be simultaneously served by their
corresponding RCs. In addition, one can verify that it is always better to serve
any UE $i$ with RC $i$ rather than MC $i$ for energy saving. Thus we finish the
reduction by concluding
that, to solve this constructed problem
instance is to maximize the number of activated RCs, subject to that at most one
RC of any pair of neighboring RCs can be in use. Hence the conclusion.
\end{proof}

\vspace{-0.45cm}
\section{Energy Minimization via Optimality Condition}
\label{sec:em}


This section aims to seek for an effective strategy to deal with the
combinatorial nature of \textit{MinE}. A partial optimality condition is derived
below, based on which we propose the relay selection algorithm.

\vspace{-0.35cm}
\subsection{Optimality Condition}
\label{subsec:opt}
We introduce some notations for deriving the optimality condition.  Denote by
$\hat{\bm{a}}$ and $\check{\bm{a}}$ any two associations, such that $\exists j$
$\hat{a}_j\neq\check{a}_j$.  Denote
$\mathpzc{l}=\{j:\hat{a}_j\neq\check{a}_j,~j\in\T\cup\R\}$.  Denote by
$\hat{\bm{x}}$ and $\check{\bm{x}}$ the fixed points of the function $\bm{F}$
under $\hat{\bm{a}}$ and $\check{\bm{a}}$, respectively.  Denote by $\hat{e}$
the total transmission energy with association $\hat{\bm{a}}$, i.e.
$\hat{e}=\sum_{j=1}^{n}p_{\hat{a}_j j}\hat{x}_{\hat{a}_j j}$, and by $\check{e}$
the total transmission energy with association $\check{\bm{a}}$, i.e.
$\check{e}=\sum_{j=1}^{n}p_{\check{a}_j j}\check{x}_{\check{a}_j j}$. 
\begin{definition}
    We define the following function for any $i\in\C_j$,
    and any $j\in\T\cup\R$, where
    $\mathpzc{t}$ is a non-empty subset of $\T\cup\R$.
\begin{equation}
G_{ij}(\bm{x},\bm{a},\mathpzc{t})=\left\{
    \begin{array}{ll}
        F_{ij}(\bm{x},\bm{a}) & j\in\mathpzc{t} \\
        x_{ij} & \textnormal{otherwise}
    \end{array}
\right.
\label{eq:F}
\end{equation}
\end{definition}
\begin{theorem}
    (\textbf{Optimality Condition}) $\hat{e}<\check{e}$ if and only if for some set
    $\mathpzc{t}$
    $(\mathpzc{l}\subseteq\mathpzc{t}\subseteq\T\cup\R)$ such
    that:
\begin{enumerate}
    \item $\sum_{j\in\mathpzc{t}}p_{\hat{a}_j j}x^{\mathpzc{t}}_{\hat{a}_j
    j}<\sum_{j\in\mathpzc{t}}p_{\check{a}_j j}\check{x}_{\check{a}_j j}$ where
    any $x^{\mathpzc{t}}_{\hat{a}_j j}$ with $j\in\T\cup\R$ is an
    element of $\bm{x}^{\mathpzc{t}}$, and $\bm{x}^{\mathpzc{t}}$ is the
    fixed point of $\bm{G}(\bm{x},\hat{\bm{a}}, \mathpzc{t})$,
    with $\check{\bm{x}}$ being the starting point.
\item $F_{\hat{a}_j j}(\bm{x}^{\mathpzc{t}},\hat{\bm{a}})\leq
    \check{x}_{\check{a}_j j}$ for any
    $j\notin\mathpzc{t}$.
\end{enumerate}
\label{thm:opt}
\end{theorem}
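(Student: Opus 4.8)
The plan is to prove the two directions separately; the \emph{only if} direction is essentially immediate and the \emph{if} direction carries the real content through a monotone-convergence argument for standard interference functions (SIFs). For the \emph{only if} direction, assuming $\hat{e}<\check{e}$, I would simply take $\mathpzc{t}=\T\cup\R$. With this choice $\bm{G}(\bm{x},\hat{\bm{a}},\mathpzc{t})$ coincides with $\bm{F}(\bm{x},\hat{\bm{a}})$, so its unique fixed point (independent of the starting point by the SIF property) is $\hat{\bm{x}}$, i.e. $x^{\mathpzc{t}}_{\hat{a}_j j}=\hat{x}_{\hat{a}_j j}$. Then condition~2 is vacuous, and condition~1 reduces exactly to $\hat{e}<\check{e}$, which holds by hypothesis.

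For the \emph{if} direction, suppose some $\mathpzc{t}$ with $\mathpzc{l}\subseteq\mathpzc{t}\subseteq\T\cup\R$ satisfies conditions~1 and~2. The central step I would establish is that $\bm{x}^{\mathpzc{t}}$ is a \emph{super-solution} of the full map, i.e. $\bm{F}(\bm{x}^{\mathpzc{t}},\hat{\bm{a}})\leq\bm{x}^{\mathpzc{t}}$ componentwise. On the coordinates $j\in\mathpzc{t}$ this holds with equality, since there $\bm{G}=\bm{F}$ and $\bm{x}^{\mathpzc{t}}$ is the fixed point of $\bm{G}$. On the coordinates $j\notin\mathpzc{t}$ we have $\hat{a}_j=\check{a}_j$ because $j\notin\mathpzc{l}$, so the frozen value is $x^{\mathpzc{t}}_{\hat{a}_j j}=\check{x}_{\check{a}_j j}$, and condition~2 reads precisely $F_{\hat{a}_j j}(\bm{x}^{\mathpzc{t}},\hat{\bm{a}})\leq x^{\mathpzc{t}}_{\hat{a}_j j}$. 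Combining the two cases yields the super-solution inequality.

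Given this inequality, I would invoke the SIF monotone-convergence property: starting the iteration $\bm{x}^{(k)}=\bm{F}(\bm{x}^{(k-1)},\hat{\bm{a}})$ from $\bm{x}^{(0)}=\bm{x}^{\mathpzc{t}}$, monotonicity propagates $\bm{F}(\bm{x}^{\mathpzc{t}})\leq\bm{x}^{\mathpzc{t}}$ down the iterates, producing a non-increasing sequence that converges to the unique fixed point $\hat{\bm{x}}$; hence $\hat{\bm{x}}\leq\bm{x}^{\mathpzc{t}}$, so $\hat{x}_{\hat{a}_j j}\leq x^{\mathpzc{t}}_{\hat{a}_j j}$ for every $j$. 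I would then assemble the energy comparison by splitting $\hat{e}=\sum_{j\in\mathpzc{t}}p_{\hat{a}_j j}\hat{x}_{\hat{a}_j j}+\sum_{j\notin\mathpzc{t}}p_{\hat{a}_j j}\hat{x}_{\hat{a}_j j}$. For the first sum, $\hat{\bm{x}}\leq\bm{x}^{\mathpzc{t}}$ gives $\sum_{j\in\mathpzc{t}}p_{\hat{a}_j j}\hat{x}_{\hat{a}_j j}\leq\sum_{j\in\mathpzc{t}}p_{\hat{a}_j j}x^{\mathpzc{t}}_{\hat{a}_j j}$, which is strictly below $\sum_{j\in\mathpzc{t}}p_{\check{a}_j j}\check{x}_{\check{a}_j j}$ by condition~1. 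For the second sum, $\hat{a}_j=\check{a}_j$ together with $\hat{x}_{\hat{a}_j j}\leq x^{\mathpzc{t}}_{\hat{a}_j j}=\check{x}_{\check{a}_j j}$ gives $\sum_{j\notin\mathpzc{t}}p_{\hat{a}_j j}\hat{x}_{\hat{a}_j j}\leq\sum_{j\notin\mathpzc{t}}p_{\check{a}_j j}\check{x}_{\check{a}_j j}$. Adding the two bounds yields $\hat{e}<\check{e}$.

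The main obstacle is the monotone-convergence step. I must justify carefully that, starting from a super-solution of an SIF, the fixed-point iteration is non-increasing and converges to the \emph{unique} fixed point, and that $\hat{\bm{x}}$ indeed exists—this last point in fact follows from the existence of the super-solution $\bm{x}^{\mathpzc{t}}$ itself. I would also need to confirm that $\bm{G}(\cdot,\hat{\bm{a}},\mathpzc{t})$, restricted to its free coordinates (those in $\mathpzc{t}$), is again an SIF so that $\bm{x}^{\mathpzc{t}}$ is well defined and unique; this holds because freezing the remaining coordinates at positive constants preserves the positivity, monotonicity, and scalability of each $F_{ij}$.
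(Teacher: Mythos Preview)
Your proposal is correct and follows essentially the same route as the paper: the necessity direction via $\mathpzc{t}=\T\cup\R$, and the sufficiency direction by showing $\bm{F}(\bm{x}^{\mathpzc{t}},\hat{\bm{a}})\leq\bm{x}^{\mathpzc{t}}$ (your ``super-solution'' language is exactly the paper's inequality $\hat{\bm{x}}^{(1)}\leq\hat{\bm{x}}^{(0)}$), then using SIF monotonicity to obtain $\hat{\bm{x}}\leq\bm{x}^{\mathpzc{t}}$ and splitting the energy sum over $\mathpzc{t}$ and its complement. Your added remarks on the existence of $\hat{\bm{x}}$ and on $\bm{G}$ being an SIF in the free coordinates are careful touches the paper leaves implicit.
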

\begin{proof}
\vskip -7pt
The necessity can be proved straightforwardly by letting
$\mathpzc{t}=\mathcal{T}\cup\mathcal{R}$. For the sufficiency, 
the basic idea is to prove $\hat{\bm{x}}\leq\bm{x}^{\mathpzc{t}}$ by using
Condition 2), and then combine it with Condition 1) to compute respectively
$\hat{e}$ and $\check{e}$. 
Suppose there
exists some set $\mathpzc{t}$
($\mathpzc{l}\subseteq\mathpzc{t}\subseteq\T\cup\R$) satisfying 1) and 2).
We consider the fixed-point iterations
$\hat{\bm{x}}^{(k)}=\bm{F}(\hat{\bm{x}}^{(k-1)},\hat{\bm{a}})$ $(k\geq 0)$. Let
$\hat{\bm{x}}^{(0)}=\bm{x}^{\mathpzc{t}}$.  For $j\in\mathpzc{t}$, since
$x^{\mathpzc{t}}_{\hat{a}_j j}=G_{\hat{a}_j
j}(\bm{x}^{\mathpzc{t}},\hat{\bm{a}},\mathpzc{t})=F_{\hat{a}_j
j}(\bm{x}^{\mathpzc{t}},\hat{\bm{a}})$, combined with the construction that
$\hat{\bm{x}}^{(0)}=\bm{x}^{\mathpzc{t}}$, we have $\hat{x}_{\hat{a}_j
j}^{(1)}=F_{\hat{a}_j j}(\hat{\bm{x}}^{(0)},\hat{\bm{a}})=G_{\hat{a}_j
j}(\hat{\bm{x}}^{(0)},\hat{\bm{a}},\mathpzc{t})=G_{\hat{a}_j
j}(\bm{x}^{\mathpzc{t}},\hat{\bm{a}},\mathpzc{t})=x^{\mathpzc{t}}_{\hat{a}_j
j}=\hat{x}^{(0)}_{\hat{a}_j j}$. For $j\notin\mathpzc{t}$,  we have
$\hat{x}_{\hat{a}_j j}^{(1)}=F_{\hat{a}_j j}(\hat{\bm{x}}^{(0)},\hat{\bm{a}})$.
By the construction that $\hat{\bm{x}}^{(0)}=\bm{x}^{\mathpzc{t}}$ and condition
2), $\hat{x}_{\hat{a}_j j}^{(1)}=F_{\hat{a}_j
j}(\hat{\bm{x}}^{(0)},\hat{\bm{a}})=F_{\hat{a}_j
j}(\bm{x}^{\mathpzc{t}},\hat{\bm{a}})\leq x_{\hat{a}_j
j}^{\mathpzc{t}}=\hat{x}_{\hat{a}_j j}^{(0)}$ holds. Therefore, we have 

\vskip -7pt
\begin{equation}
\hat{\bm{x}}^{(1)}\leq\hat{\bm{x}}^{(0)}=\bm{x}^{\mathpzc{t}}.
\end{equation}
\vskip -7pt

We first consider $\hat{e}$. By the monotonicity of $\bm{F}(\bm{x},\hat{\bm{a}})$ in $\bm{x}$, we have the
following property. For any $k\geq 0$, if $\bm{x}^{(k)}\leq\bm{x}^{(k-1)}$, then
$\bm{F}(\bm{x}^{(k)},\hat{\bm{a}})\leq\bm{F}(\bm{x}^{(k-1)},\hat{\bm{a}})$
holds, which would directly lead to $\bm{x}^{(k+1)}\leq\bm{x}^{(k)}$.  According
to the discussion above, we have $\bm{x}^{(k+1)}\leq\bm{x}^{(k)}$ for
$k=0$. We therefore conclude by mathematical induction that
$\hat{\bm{x}}\leq\cdots\leq\hat{\bm{x}}^{(1)}\leq\hat{\bm{x}}^{(0)}=\bm{x}^{\mathpzc{t}}$. Thus,
we have 

\vskip -7pt
\begin{equation}
\hat{e}\leq\sum_{j=1}^{n}p_{\hat{a}_j j}x^{\mathpzc{t}}_{\hat{a}_j j}
\end{equation}
\vskip -7pt

We then consider $\check{e}$.  For any $j\notin\mathpzc{t}$, we have
$x_{\hat{a}_j j}^{\mathpzc{t}}=G_{\hat{a}_j
j}(\bm{x}^{\mathpzc{t}},\hat{\bm{a}},\mathpzc{t})$. Since
$\mathpzc{l}\subseteq\mathpzc{t}$, we conclude $j\notin\mathpzc{l}$ for any
$j\notin\mathpzc{t}$. Therefore, according to the definition of 
$\mathpzc{l}$, we have $\hat{a}_j=\check{a}_j$ for $j\notin\mathpzc{t}$.
Note that in condition 1), $\check{\bm{x}}$ is the starting
point for the fixed-point iterations of the function
$\bm{G}(\bm{x},\check{\bm{a}},\mathpzc{t})$. According to the definition of the
function $G_{ij}$ in~\eqref{eq:F}, for any $j\notin\mathpzc{t}$, we have
$x^{\mathpzc{t}}_{\hat{a}_j j}=\check{x}_{\hat{a}_j j}=\check{x}_{\check{a}_j
j}$. Thus we conclude $\sum_{j\notin\mathpzc{t}}p_{\hat{a}_j
j}x^{\mathpzc{t}}_{\hat{a}_j j}=\sum_{j\notin\mathpzc{t}}p_{\check{a}_j
j}x_{\check{a}_j j}$.  For any $j\in\mathpzc{t}$, condition 1) shows that
$\sum_{j\in\mathpzc{t}}p_{\hat{a}_j j}x^{\mathpzc{t}}_{\hat{a}_j
j}<\sum_{j\in\mathpzc{t}}p_{\check{a}_j j}\check{x}_{\check{a}_j j}$.
Therefore, we obtain 

\vskip -15pt
\begin{equation}
\check{e}>\sum_{j=1}^{n}p_{\hat{a}_j j}x^{\mathpzc{t}}_{\hat{a}_j j}
\end{equation}
\vskip -5pt

Hence the conclusion $\hat{e}<\check{e}$.
\end{proof}
Given any two associations $\hat{\bm{a}}$ and $\check{\bm{a}}$,
Theorem~\ref{thm:opt} serves as a sufficient and necessary condition for
checking whether $\hat{\bm{a}}$ leads to a better energy performance than
$\check{\bm{a}}$.  When $\mathpzc{t}\subset\T\cup\R$, it is called asynchronous
fixed-point iterations~\cite{Yates:1995eh}, which is rather faster if
$|\mathpzc{t}|\ll |\T\cup\R|$ and can be implemented in a distributed manner. 

\vspace{-0.3cm}
\subsection{Algorithm Design}
\label{subsec:alg}

We use the function $A_j(\bm{x})= \argmin\limits_{i\in\C_j}p_{ij}x_{ij}$ to assign each $j\in\T\cup\R$ to
the cell with lowest energy for transmitting to $j$. 
Algorithm~\ref{alg:rs} below takes an
initial association $\check{\bm{a}}$ as the input, and outputs the optimized
association $\hat{\bm{a}}$. The pre-defined parameter $\eta$ indicates
the maximum number of loop rounds. The vectors $\bm{a}$ and $\bm{x}$ are
iteratively updated by functions $\bm{A}$ and $\bm{F}$, respectively.  Once
$\bm{a}^{(k)}=\bm{a}^{(k-1)}$ holds for any round $k$, meaning that there is no
update on vector $\bm{a}^{(k)}$, the algorithm terminates and returns
the optimized association $\hat{\bm{a}}$.  In each round, the set
$\mathpzc{l}^{(k)}$ records the positions of all the different elements between
$\bm{a}^{(k)}$ and $\check{\bm{a}}$. In Lines~\ref{l:t1} and~\ref{l:t2}, the
asynchronous fixed-point iterations are applied with respect to set
$\mathpzc{t}$, with $\mathpzc{t}\supseteq\mathpzc{l}^{(k)}$. Lines~\ref{l:opt0}
and~\ref{l:opt} use the optimality condition to check if the new association
$\bm{a}^{(k)}$
would improve the transmission energy, with numerical tolerances $\epsilon_1$
and $\epsilon_2$.
\vspace{-0.5cm}
\begin{algorithm}[!h]
$\bm{a}^{(0)}\leftarrow \check{\bm{a}}$, $\hat{\bm{a}}\leftarrow\check{\bm{a}}$\;
$\bm{x}^{(0)}\leftarrow$ fixed point of $\bm{F}(\bm{x},\bm{a}^{(0)})$\;
\For{$k\leftarrow 1~\textnormal{\textbf{to}}~\eta$\label{l:lpstart}}{
$\bm{a}^{(k)}\leftarrow\bm{A}(\bm{x}^{(k-1)})$\;
\If{$\bm{a}^{(k)}=\bm{a}^{(k-1)}$}{\textbf{break}\;}
$\bm{x}^{(k)}\leftarrow\bm{F}(\bm{x}^{(k-1)},\bm{a}^{(k)})$\;
$\mathpzc{l}^{(k)}\leftarrow\{j:a^{(k)}_{j}\neq\check{a}_{j},~j\in\T\cup\R\}$\;
choose a set $\mathpzc{t}$ such that $\mathpzc{l}^{(k)}\subseteq\mathpzc{t}$\label{l:t1}\;
$\bm{x}^{\mathpzc{t}}\leftarrow$ fixed point of
$\bm{G}(\bm{x},\hat{\bm{a}},\mathpzc{t})$, starting from
$\bm{x}^{(k)}$\label{l:t2}\;
\If{$\sum_{j\in\mathpzc{t}}p_{a_j^{(k)}j}x^{\mathpzc{t}}_{a_j^{(k)}
j}<\sum_{j\in\mathpzc{t}}p_{\check{a}_j j}\check{x}_{\check{a}_j
j}+\epsilon_1$\label{l:opt0}  \\ \quad\quad\quad
$\wedge$ $F_{a_j^{(k)} j}(\bm{x}^{\mathpzc{t}},\hat{\bm{a}})\leq
\check{x}_{\check{a}_j j}+\epsilon_2$\label{l:opt}}
{
    $\hat{\bm{a}}\leftarrow\bm{a}^{(k)}$\label{l:lpend}
}
}
\KwRet{$\hat{\bm{a}}$}\;
\caption{Relay Selection} 
\label{alg:rs}
\end{algorithm}

\vspace{-0.7cm} 
\section{Numerical and Simulation Results}
\vspace{-0.1cm}

For simulation, $7$ MCs are deployed at the center of a hexagonal region, each
with the distance of $500$ meters to its neighbor MC\@. In each hexagonal
region, $2$ or $4$ RCs as well as $20$ UEs are randomly placed. The network
operates at $2$ GHz. Each RU is set to $180$ kHz bandwidth and the bandwidth for
each cell is $20$ MHz. The noise power spectral density is set to $-174$ dBm/Hz.
We remark that the simulation settings follow the 3GPP standardization
document~\cite{release13-1}, to be consistent with expected 5G network scenarios
in terms of bandwidth and network density. Also, the path loss of MCs and RCs
follow the standard 3GPP urban macro and micro models~\cite{release9},
respectively. With the simulation setup and based on one thousand instances, the
peak rate can achieve 1 Gbps; this is consistent with~\cite{release13-1} and the
expectation of 5G. The average rate, which is naturally lower than the peak one,
depends on user density and resource sharing The shadowing coefficients are
generated by the log-normal distribution with $6$ dB and $3$ dB standard
deviation~\cite{release9}, for MCs and RCs, respectively.  The maximum transmit
power levels for MCs and RCs are set to $800$ mW and $50$ mW per RU,
respectively. Simulations are run over multiple data sets and averaged
afterwards.

\begin{figure}[!h]
    \centering
    \vskip -15pt
    \includegraphics[width=0.8\linewidth]{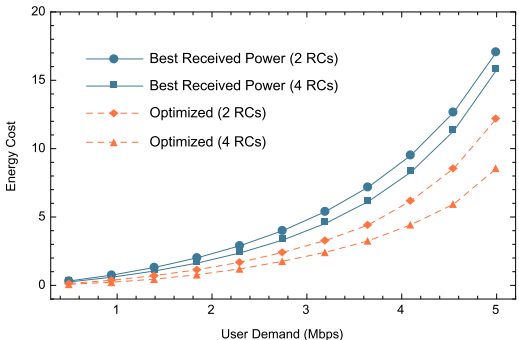}
     \vskip -15pt
    \caption{User demand versus energy cost.}
\label{fig:sim}
\vspace{-0.35cm}
\end{figure}

In \figurename~\ref{fig:sim}, there are two network scenarios, in which 2 and 4
RCs are deployed in each hexagonal region, respectively. As references for
comparison, each UE (or RC) $j$ is associated with the cell in $\C_j$ with the
best received power. As expected, the 4-RCs case benefits more on energy
performance via relay selection, compared to the 2-RCs case, since that each UE
has more options for choosing its serving cell in the former. For these two
cases, the improvements by using the proposed algorithm are $34\%$ and $47\%$,
respectively. Furthermore, the improvement becomes larger, with the increase of
the user demand, which indicates that an appropriate relay selection is crucial
for a network with heavy data traffic. We remark that for the best-received
power based relay selection, the network can still also benefit from deploying more
RCs on energy cost. In other words, the energy cost can be reduced by deploying
more RCs, without optimizing the relay selection. However, one can see from the
numerical results that the corresponding gain is far less compared to optimizing the relay
selection. 

\vspace{-0.2cm}
\section{Conclusion}

This paper has provided insights as well as an algorithm 
for energy-aware relay selection in load-coupled OFDMA cellular networks.
The algorithm exhibits good performance for energy saving.

\vspace{-0.5cm}
\section*{Acknowledgement}
This work has been supported by the Swedish Research Council and the
Link\"{o}ping-Lund Excellence Center in Information Technology (ELLIIT), Sweden,
and the European Union Marie Curie project MESH-WISE (FP7-PEOPLE-2012-IAPP\@:
324515), DECADE (H2020-MSCA-2014-RISE\@: 645705), and WINDOW
(FP7-MSCA-2012-RISE\@: 318992). The work of D. Yuan has been
carried out within European FP7 Marie Curie IOF project 329313.

\vspace{-0.2cm}
\bibliographystyle{IEEEtran}
\bibliography{ref}
\end{document}